\documentclass[journal]{IEEEtran}

\usepackage{graphicx}
\usepackage{citesort}
\usepackage{amssymb}
\usepackage{amsfonts}
\usepackage{amsmath}
\usepackage{epsfig}
\usepackage{color}
\usepackage{fancybox}
\usepackage{textcomp}
\usepackage{multirow}
\usepackage{setspa ce}
\usepackage{psfrag}
\usepackage[ruled,vlined,linesnumbered]{algorithm2e}

\usepackage{mathrsfs}

\usepackage{booktabs}
\usepackage{threeparttable}
\newtheorem{Theorem}{Theorem}

\newcommand{\argmax}{\operatorname*{argmax}}

    \def\Complex{{\rm\rule[.23ex]{.03em}{1.1ex}\kern-.3em{C}}}

    \newcommand{\be}{\begin{equation}} \newcommand{\ee}{\end{equation}}
    \newcommand{\bea}{\begin{eqnarray}} \newcommand{\eea}{\end{eqnarray}}
    \newcommand{\benum}{\begin{enumerate}} \newcommand{\eenum}{\end{enumerate}}

    \newcommand{\qc}{{\bf c}}

    \newcommand{\qg}{{\bf g}}

    \newcommand{\qm}{{\bf m}}

    \newcommand{\qs}{{\bf s}}

    \newcommand{\qy}{{\bf y}}

    \newcommand{\qY}{{\bf Y}}

    \newcommand{\qepsilon}{{\boldsymbol \epsilon}}

    \newcommand{\bbR}{{\mathbb R}}
    \newcommand{\bbC}{{\mathbb C}}

\begin{document}

\title{Active RIS-Assisted MIMO System for Vital Signs Extraction: ISAC Modeling, Deep Learning, and Prototype Measurements}

\author{De-Ming Chian, \IEEEmembership{Member, IEEE},~Chao-Kai~Wen, \IEEEmembership{Fellow, IEEE},~Feng-Ji~Chen,
Yi-Jie~Sun, and~Fu-Kang~Wang, \IEEEmembership{Member, IEEE}
\thanks{D.-M.~Chian, C.-K.~Wen, and F.-J.~Chen are with the Institute of Communications Engineering, National Sun Yat-sen University, Kaohsiung 804, Taiwan (e-mail: $\rm icefreeman123@gmail.com$, $\rm chaokai.wen@mail.nsysu.edu.tw$, $\rm king19635@gmail.com$).}
\thanks{Y.-J.~Sun and F.-K.~Wang are with the Department of Electrical Engineering, National Sun Yat-sen University, Kaohsiung 804, Taiwan (e-mail: $\rm spring1968bear@gmail.com$, $\rm fkw@mail.ee.nsysu.edu.tw$).}
\thanks{ \color{red}{The source code for DMTNet will be released soon.} }
}

%\markboth{IEEE Transactions on Wireless Communications,~Vol.~XX, No.~XX, XXX~2013}%
%{Shell \MakeLowercase{\textit{et al.}}: Bare Demo of IEEEtran.cls for Journals}

% make the title area
\maketitle

\begin{abstract}
We present the RIS-VSign system, an active reconfigurable intelligent surface (RIS)-assisted multiple-input multiple-output orthogonal frequency division multiplexing (MIMO-OFDM) framework for vital signs extraction under an integrated sensing and communication (ISAC) model. The system consists of two stages: the phase selector of RIS and the extraction of vital signs. To mitigate synchronization-induced common phase drifts, the difference of Möbius transformation (DMT) is integrated into the deep learning framework, named DMTNet, to jointly configure multiple active RIS elements. Notably, the training data are generated in simulation without collecting real-world measurements, and the resulting phase selector is validated experimentally. For sensing, multi-antenna measurements are fused by the DC-offset calibration and the DeepMining-MMV processing with CA-CFAR detection and Newton's refinements. Prototype experiments indicate that active RIS deployment improves respiration detectability while simultaneously enabling higher-order modulation; without RIS, respiration detection is unreliable and only lower-order modulation is supported.
\end{abstract}

\begin{IEEEkeywords}
Reconfigurable intelligent surface, integrated sensing and communication, Möbius transformation, deep learning.
\end{IEEEkeywords}

\section{Introduction}

Driven by the vision of sixth-generation (6G) wireless networks, ISAC is widely regarded as a cornerstone technology for enabling native environmental perception and context awareness on top of high-throughput connectivity. By unifying sensing and communications, ISAC allows the two functions to share spectrum, waveforms, and infrastructure, thereby meeting requirements for spectral and hardware efficiency, low latency, high reliability, and deployability in realistic environments with blockage and multipath. However, in practical MIMO-OFDM systems, non-contact vital-sign sensing is often fragile due to weak micro-motion returns and synchronization impairments \cite{Zeng-2019,He-2020}, e.g., sampling frequency offset (SFO), carrier frequency offset (CFO), and packet detection delay (PDD), which induce common phase drifts.

To be effective in real-world ISAC systems, an RIS with the improvement ability of signal quality is suitable to support fastly because of the low control overhead.  Despite rapid progress, practical ISAC deployments face a persistent control bottleneck, particularly when a RIS is introduced as a programmable propagation interface  \cite{Chepuri-2024}. 
Optimization-based phase configuration and pilot-intensive strategies can be computationally expensive and sensitive to phase drift, while purely data-driven solutions are often limited by the scarcity of labeled real-world measurements and potential domain mismatch between simulation and deployment. These challenges motivate a model-informed learning approach that (i) reduces reliance on extensive calibration and data collection, and (ii) maintains sensing robustness without sacrificing communication performance.

\begin{figure}
    \centering
    \resizebox{3.2in}{!}{%
    \includegraphics*{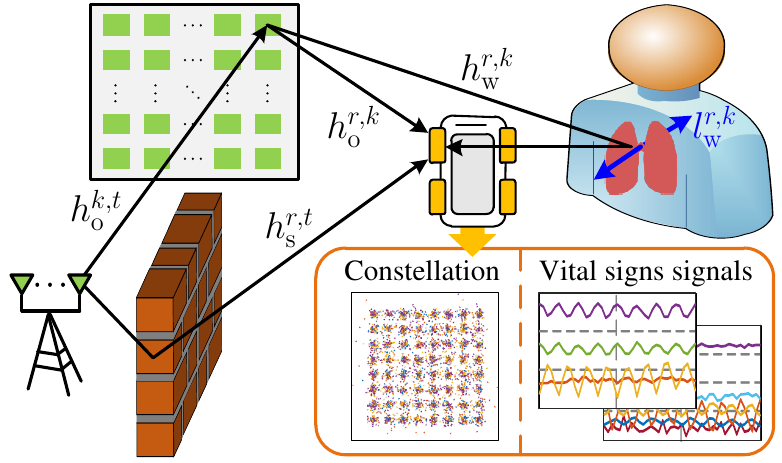} }%
    \caption{Application of RIS-VSign system.
    \label{fig:RIS-VSign}}
\end{figure}

The proposed RIS-VSign system consists of two steps: the phase selector of RIS and vital signs extraction. The key contributions of this study are as follows:
\begin{itemize}
\item The proposed phase selector of RIS is based on the deep learning, and verified by the experiments. 
\item Without using any real-world data, the training dataset is constructed based on a modified channel model \cite{Chian-2023} in the simulation. This implies that the proposed ISAC model is suitable for practical deployment.
\item We evaluate that the deployment of an active RIS in a MIMO system simultaneously enhances the experimental performance of both sensing and communication.
\end{itemize}

% Comment: 這節我幾乎看不懂
% 我認為 應該畫個圖把 (3) 的路徑哪一段到哪裡講清楚
% 模型是要清楚，不是把所有東西都放上去，例如 the phase offset 可能不應該放上去，就說可以去除就可以，或是之後再加上去
% Ans: 學生修改成加在後面。
% 我認為本節應該從 We consider a 5G NR-compatible OFDM ... 開始
% 前面三段，我根本不知道要說甚麼
% Ans: 是講模型建立的概念，學生刪掉了，如果後續introduction有需要的話，再描述。
\section{ISAC Model}

\subsection{Practical Channel Model}

In Fig.~\ref{fig:RIS-VSign}, we consider a 5G NR-compatible OFDM downlink system, where a Tx equipped with $T$ transmit antennas and $K$ active RIS elements serve an Rx with $R$ receive antennas.
We assume that the echo signal from a person is only influenced by the RIS AM path, implying that the static channels do not contain vital signs.
For respiration rate estimation, $P$ consecutive packets are required \cite{He-2020}.

The static components of the channel include the RIS SM path and the direct Tx-Rx path, while the dynamic components include the RIS AM path and the Doppler-affected reflections.
The channel frequency response (CFR) of the static channel, observed at the $r$-th receive antenna from the $t$-th transmit antenna, is given by $h^{r, t}_{\rm s}$.
Moreover, the CFR of the dynamic channel from the $k$-th active RIS element is given by
\begin{equation} \label{eq:hdk}
    h^{r, t}_{\rm d, \theta_k}
    =
    (
    h^{r, k}_{\rm o}
    +
    h^{r, k}_{\rm w}
    e^{ -j \frac{2 \pi}{\lambda}
    l^{r, k}_{\rm w}  } )  a_k h^{k, t}_{\rm o},
\end{equation}
where $a_k = g_k e^{ -j \theta_k }$ is the response of the $k$-th active RIS element with a constant gain $g_k$ and a phase shift $\theta_k$, $h^{r, k}_{\rm o}$ and $h^{r, k}_{\rm w}$ represent the complex attenuation components without and with vital signs, respectively, $l^{r, k}_{\rm w}$ denotes the chest wall displacement, and $\lambda$ is the wavelength.

However, when applying the channel models in practical scenarios, an additional challenge arises: the lack of perfect time synchronization between Tx and Rx. This results in time-varying phase offsets \cite{He-2020}, including SFO, CFO, and PDD.
%sampling frequency offset (SFO), carrier frequency offset (CFO), and packet detection delay (PDD).
Since all Rx antennas share the same RF oscillator, they experience the same phase offsets when receiving from a given Tx antenna, making the phase offset independent of the Rx antennas.
Then, the CFRs of the static and dynamic channels considering time-varying phase offsets are given by
$c^{r, t}_{\rm s} = h^{r, t}_{\rm s}e^{ -j \psi^t }$ and $c^{r, t}_{\rm d, \theta_k} = h^{r, t}_{\rm d, \theta_k} e^{ -j \psi^t }$, respectively,
where $\psi^{t}$ represents the time-varying phase offset.
Assuming an unit transmission power per antenna, the signal power of the $r$-th receivr antenna with the $t$-th transmit antenna and the $k$-th active RIS element is defined as:
\begin{equation} \label{eq:RP}
  E^{r, t}_{\theta_k}
  = | {c^{r, t}_{\rm s}} + {c^{r, t}_{\rm d, \theta_k}} |^2
  = | {h^{r, t}_{\rm s}} + {h^{r, t}_{\rm d, \theta_k}} |^2.
\end{equation}

\subsection{Optimization Model}

To simplify the optimization problem of $E$, we consider a single RIS element. For the $k$-th active RIS element with phase shift $\theta_k$, the maximization of (\ref{eq:RP}) in a MIMO system is expressed as:
\begin{equation} \label{eq:SNRone}
  \sum_{r=1}^{R}
  \sum_{t=1}^{T}
  E^{r, t}_{\theta_k}
  \triangleq
  \sum_{r=1}^{R}
  \sum_{t=1}^{T}
  | { h^{r, t}_{\rm s} } |^2 +
  | { h^{r, t}_{\rm d, \theta_k} } |^2 +
  2 \mathfrak{Re}(p^{r, t}_{\theta_k}),
\end{equation}
where $p^{r, t}_{\theta_k} = { h^{r, t}_{\rm s} }  ( { h^{r, t}_{\rm d, \theta_k} } )^* $ is the complex inner product of  ${ h^{r, t}_{\rm s} }$ and $h^{r, t}_{\rm d, \theta_k}$.
The first term in (\ref{eq:SNRone}) is a constant. Moreover, the chest wall displacement during respiration is minimal, and the received signal magnitude from the human body is significantly lower than that of the direct path, i.e., $|  h^{r, k}_{\rm o} | > | h^{r, k}_{\rm w} |$ in (\ref{eq:hdk}). Therefore, during the process of the maximization of (\ref{eq:RP}), we reasonably assume that the second term in (\ref{eq:SNRone}) is also constant. Then, the maximization of (\ref{eq:SNRone}) is equivalent to maximize the reward function:
\begin{equation} \label{eq:reward}
  Q({\theta_k})
  =
  \sum_{r=1}^{R}
  \sum_{t=1}^{T}
  \mathfrak{Re}
    \left(
    p^{r, t}_{\theta_k}
    \right).
\end{equation}

Furthermore, the complex inner product $p^{r, t}_{\theta_k}$ is rewritten as
\begin{equation} \label{eq:InproSensing}
  p^{r, t}_{\theta_k}
    =
   e^{ j \theta_k }
   \left[
    (h^{r, t}_{\rm s})^*
    h^{k, t}_{\rm o}
    h^{r, k}_{\rm o}
    +
    (h^{r, t}_{\rm s})^*
    h^{k, t}_{\rm o}
    h^{r, k}_{\rm w}
    e^{ -j \frac{2 \pi}{\lambda}
    l^{r, k}_{\rm w}  }
   \right]^*,
\end{equation}
where the first and second terms in the square bracket correspond to the communication and sensing channels, respectively. As shown in \cite{Chepuri-2024}, RIS can enhance both communication and sensing in an ISAC framework. Since (\ref{eq:InproSensing}) also captures this phenomenon, maximizing (\ref{eq:SNRone}) improves both communication and sensing performance.

% Commnet: 若你最後要得到 (9) ，其實根本不需要寫 MIMO 引入 R、T、K
% 最後我不太知道你的目的要做甚麼?
% Ans: R,T是為了優化問題的加總，和後面說明生理訊號的數量

\section{Phase Selector}

To eliminate time-varying offsets, two strategies are summarized in \cite{He-2020}. One utilizes an accurate phase information by using the direct coaxial connection between Tx and Rx. The other is based on the Möbius transformation (MT) \cite{Zeng-2019}. Because of the random location of Rx, the direct coaxial connection is not suitable for practical environments. We let the $r'$-th receive antenna be the reference. Using the $k$-th RIS element with the $\theta_k$ phase shift, the MT of total channel is applied and defined as follows:
\begin{subequations} \label{eq:MT}
\begin{align}
    m^{r, t}_{\theta_k}
    & \triangleq
    ( c^{r', t}_{\rm s} + c^{r', t}_{{\rm d}, \theta_k} ) /
    ( c^{r, t}_{\rm s} + c^{r, t}_{{\rm d}, \theta_k} ) \\
    & =
    (\mathcal{A} e^{ j \theta_k} + \mathcal{B}) /
    (\mathcal{C} e^{ j \theta_k} + \mathcal{D}) \\
    & =
    \mathcal{M}[\mathcal{C} e^{ j \theta_k} + \mathcal{D})]^{-1} +
    ( \mathcal{A} / \mathcal{C} ),
\end{align}
\end{subequations}
where
\begin{equation} \label{eq:MTpar}
\begin{split}
    \mathcal{A}
    & =
    h^{r', t}_{\rm s}
    , \;
    \mathcal{B} =
    g_k
    h^{k, t}_{\rm o}
    ( h^{r', k}_{\rm o} +
    h^{r', k}_{\rm w}
    e^{ -j \frac{2 \pi}{\lambda}
    l^{r', k}_{\rm w}  } ),  \\
    \mathcal{C}
    & =
    h^{r, t}_{\rm s}
    , \;
    \mathcal{D} =
    g_k
    h^{k, t}_{\rm o}
    ( h^{r, k}_{\rm o} +
    h^{r, k}_{\rm w}
    e^{ -j \frac{2 \pi}{\lambda}
    l^{r, k}_{\rm w}  } ),  \\
    \mathcal{M}
    & =
    (\mathcal{B}\mathcal{C} - \mathcal{A}\mathcal{D}) / \mathcal{C}
    , \;
    \mathrm{and} \;
    \mathcal{B}\mathcal{C} - \mathcal{A}\mathcal{D} \neq 0.
\end{split}
\end{equation}
The MT of static channel is $m^{r, t}_{\rm s} = c^{r', t}_{\rm s} / c^{r, t}_{\rm s} = \mathcal{A} / \mathcal{C}$.

\subsection{Predictor for One Element} \label{sec:Predictor}

Next, we propose the selector of one RIS element's phase, based on the prediction of tendency in (\ref{eq:reward}) for different $\theta_k$.
Our predictor needs three inputs, $m^{r, t}_{\rm s}$, $m^{r, t}_{\theta_k}$, and $m^{r, t}_{\theta'_k}$, assuming that $| \theta'_k - \theta_k | = 180^{\circ}$.
Because of the respiration's tiny vibration, we ignore the respiration's effect for $m^{r, t}_{\theta_k}$ and $m^{r, t}_{\theta'_k}$ during the RIS's controlling process.
We define the difference of Möbius transformation (DMT) parameter as follows:
\begin{equation} \label{eq:DMT}
    {\delta}^{r, t}_{\theta_k}
    \triangleq
    m^{r, t}_{\theta_k}
    -
    m^{r, t}_{\rm s}
    =
    \mathcal{M}(\mathcal{C} e^{ j \theta_k} + \mathcal{D})^{-1}.
\end{equation}
To get $p^{r, t}_{\theta_k}$ in (\ref{eq:SNRone}), we use two DMT parameters, ${\delta}^{r, t}_{\theta_k}$ and ${\delta}^{r, t}_{\theta'_k}$, and the following theorem can be found.
\begin{Theorem} \label{thm1}
When $| \theta'_k - \theta_k | = 180^{\circ}$, the complex inner product is
$p^{r, t}_{\theta_k} =
  \mathfrak{Re}
    \left(
    p^{r, t}_{\theta_k}
    \right) +
   j \mathfrak{Im}
    \left(
    p^{r, t}_{\theta_k}
    \right) $,
where
\begin{subequations} \label{eq:InProduct}
\begin{align}
    \mathfrak{Re}
    \left( p^{r, t}_{\theta_k} \right)
    &=
    \frac{
    | \delta^{r, t}_{\theta'_k} |^2 -
    | \delta^{r, t}_{\theta_k} |^2
    }
    { 4 | \delta^{r, t}_{\theta'_k} |^2 }
    E^{r, t}_{\theta_k},
    \label{eq:InProduct_Re}                                                       \\
    j \mathfrak{Im}
    \left( p^{r, t}_{\theta_k} \right)
    &=
    \frac{
    \delta^{r, t}_{\theta'_k}
    (\delta^{r, t}_{\theta_k})^* -
    (\delta^{r, t}_{\theta'_k})^*
    \delta^{r, t}_{\theta_k}
    }
    { 4 | \delta^{r, t}_{\theta'_k} |^2 }
    E^{r, t}_{\theta_k}.
    \label{eq:InProduct_Im}
\end{align}
\end{subequations}
\end{Theorem}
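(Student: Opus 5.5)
The plan is to make every quantity in the statement explicit in terms of $\mathcal{C}$, $\mathcal{D}$ and $\theta_k$. All common factors (the unknown phase offset $e^{-j\psi^t}$ and the Möbius constant $\mathcal{M}$) should then cancel in the ratios appearing in (\ref{eq:InProduct_Re})--(\ref{eq:InProduct_Im}). Throughout, respiration is treated as frozen during control, as assumed in Section~\ref{sec:Predictor}. Hence $\mathcal{D}=g_k h^{k,t}_{\rm o}(h^{r,k}_{\rm o}+h^{r,k}_{\rm w}e^{-j\frac{2\pi}{\lambda}l^{r,k}_{\rm w}})$ is a fixed complex number.

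\textbf{Step 1 (explicit forms).} From (\ref{eq:hdk}) with $a_k=g_ke^{-j\theta_k}$ we read off $h^{r,t}_{\rm d,\theta_k}=\mathcal{D}e^{-j\theta_k}$, and $h^{r,t}_{\rm s}=\mathcal{C}$. Introduce the shorthand $u\triangleq\mathcal{C}e^{j\theta_k}$. Then:
\begin{itemize}
\item $p^{r,t}_{\theta_k}=h^{r,t}_{\rm s}(h^{r,t}_{\rm d,\theta_k})^*=\mathcal{C}\mathcal{D}^*e^{j\theta_k}=u\mathcal{D}^*$;
\item by (\ref{eq:RP}), $E^{r,t}_{\theta_k}=|\mathcal{C}+\mathcal{D}e^{-j\theta_k}|^2=|u+\mathcal{D}|^2$.
\end{itemize}

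\textbf{Step 2 (the two DMT parameters).} By (\ref{eq:DMT}), $\delta^{r,t}_{\theta_k}=\mathcal{M}/(u+\mathcal{D})$. Since $|\theta'_k-\theta_k|=180^\circ$, we have $e^{j\theta'_k}=-e^{j\theta_k}$, so $\delta^{r,t}_{\theta'_k}=\mathcal{M}/(\mathcal{D}-u)$. This is where the half-turn hypothesis enters. We use $\mathcal{M}\neq0$, which follows from $\mathcal{B}\mathcal{C}-\mathcal{A}\mathcal{D}\neq0$ in (\ref{eq:MTpar}), together with finiteness of both denominators. Dividing one parameter by the other eliminates $\mathcal{M}$:
\begin{equation*}
\delta^{r,t}_{\theta_k}/\delta^{r,t}_{\theta'_k}=(\mathcal{D}-u)/(\mathcal{D}+u).
\end{equation*}

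\textbf{Step 3 (real part).} The right side of (\ref{eq:InProduct_Re}) equals
\begin{equation*}
\tfrac14\bigl(1-|\delta^{r,t}_{\theta_k}|^2/|\delta^{r,t}_{\theta'_k}|^2\bigr)E^{r,t}_{\theta_k}
=\tfrac14\bigl(|\mathcal{D}+u|^2-|\mathcal{D}-u|^2\bigr).
\end{equation*}
The last equality holds because $E^{r,t}_{\theta_k}=|\mathcal{D}+u|^2$ clears the denominator. The parallelogram-type identity $|\mathcal{D}+u|^2-|\mathcal{D}-u|^2=4\mathfrak{Re}(u\mathcal{D}^*)$ then gives $\mathfrak{Re}(p^{r,t}_{\theta_k})$.

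\textbf{Step 4 (imaginary part).} Divide numerator and denominator of the fraction in (\ref{eq:InProduct_Im}) by $|\delta^{r,t}_{\theta'_k}|^2$. The fraction becomes
\begin{equation*}
\tfrac14\bigl[(\delta^{r,t}_{\theta_k}/\delta^{r,t}_{\theta'_k})^*-\delta^{r,t}_{\theta_k}/\delta^{r,t}_{\theta'_k}\bigr].
\end{equation*}
Substituting the ratio from Step 2 and multiplying by $E^{r,t}_{\theta_k}=|\mathcal{D}+u|^2$ yields
\begin{equation*}
\tfrac14\bigl[(\mathcal{D}-u)^*(\mathcal{D}+u)-(\mathcal{D}-u)(\mathcal{D}+u)^*\bigr].
\end{equation*}
Expanding, the $|\mathcal{D}|^2$ and $|u|^2$ terms cancel and what remains is $\tfrac14\cdot2(u\mathcal{D}^*-u^*\mathcal{D})=j\,\mathfrak{Im}(u\mathcal{D}^*)=j\,\mathfrak{Im}(p^{r,t}_{\theta_k})$.

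\textbf{Expected obstacle.} The algebra itself is routine. The main point needing care is bookkeeping of the phase conventions. The paper writes the MT as $(\mathcal{A}e^{j\theta_k}+\mathcal{B})/(\mathcal{C}e^{j\theta_k}+\mathcal{D})$, whereas the channels carry $e^{-j\theta_k}$ on the RIS term. One must check that the factors $e^{-j\psi^t}$ and $e^{-j\theta_k}$ are common to numerator and denominator and therefore cancel in $m^{r,t}_{\theta_k}$. One must also check that $E^{r,t}_{\theta_k}=|\mathcal{C}e^{j\theta_k}+\mathcal{D}|^2$ is unaffected by this renormalization, since $|e^{j\theta_k}|=1$. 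Once $E$, $p$ and both $\delta$'s are expressed through the same $u$, the identity follows directly.
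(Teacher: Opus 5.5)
Your proof is correct and is essentially the paper's appendix argument. Both rest on $\delta^{r,t}_{\theta_k}=\mathcal{M}/(\mathcal{C}e^{j\theta_k}+\mathcal{D})$, the half-turn substitution $e^{j\theta'_k}=-e^{j\theta_k}$, $E^{r,t}_{\theta_k}=|\mathcal{C}e^{j\theta_k}+\mathcal{D}|^2$, and expanding $(\mathcal{C}e^{j\theta_k}+\mathcal{D})(\mathcal{C}e^{j\theta'_k}+\mathcal{D})^*$ minus its conjugate. The only difference is bookkeeping: you cancel $\mathcal{M}$ up front through the ratio $\delta^{r,t}_{\theta_k}/\delta^{r,t}_{\theta'_k}$, whereas the paper carries $|\mathcal{M}|^2$ through and eliminates it at the end via $|\mathcal{M}|^2=|\delta^{r,t}_{\theta_k}|^2E^{r,t}_{\theta_k}$.
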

\begin{IEEEproof}[Proof \ref{thm1}]
Please refer to Appendix for more details.
\end{IEEEproof}
With the given $p^{r, t}_{\theta_k}$, we can predict the complex inner product $p^{r, t}_{\theta^{\star}_k}$ of any $\theta^{\star}_k \in [0, 2\pi)$. The corresponding magnitude and phase of $p^{r, t}_{\theta^{\star}_k}$ are respectively expressed as
\begin{subequations} \label{eq:AnyP}
\begin{align}
  | p^{r, t}_{\theta^{\star}_k} |
  & =
  | p^{r, t}_{\theta_k} | , \label{eq:AnyPM} \\
  \measuredangle{p^{r, t}_{\theta^{\star}_k}}
  & =
  \theta^{\star}_k - \theta_k +
  \measuredangle{p^{r, t}_{\theta_k}}. \label{eq:AnyPA}
\end{align}
\end{subequations}
Finally, by substituting (\ref{eq:AnyP}) into (\ref{eq:reward}), $\widehat{ \theta_k } = \argmax\limits_{\theta^{\star}_k} Q({\theta^{\star}_k})$ is the selected phase of the $k$-th RIS element.

Note that the complex inner product $p^{r, t}_{\theta^{\star}_k}$ cannot be directly achieved by the complex inner product between the DMT parameter ${\delta}^{r, t}_{\theta_k}$ and the MT of static channel $m^{r, t}_{\rm s}$ without using Theorem \ref{thm1}. The reason is that ${\delta}^{r, t}_{\theta_k}$ still contains static component from $\mathcal{D}$.

\subsection{DMTNet for Two Elements}

\begin{figure}
    \centering
    \resizebox{3.4in}{!}{%
    \includegraphics*{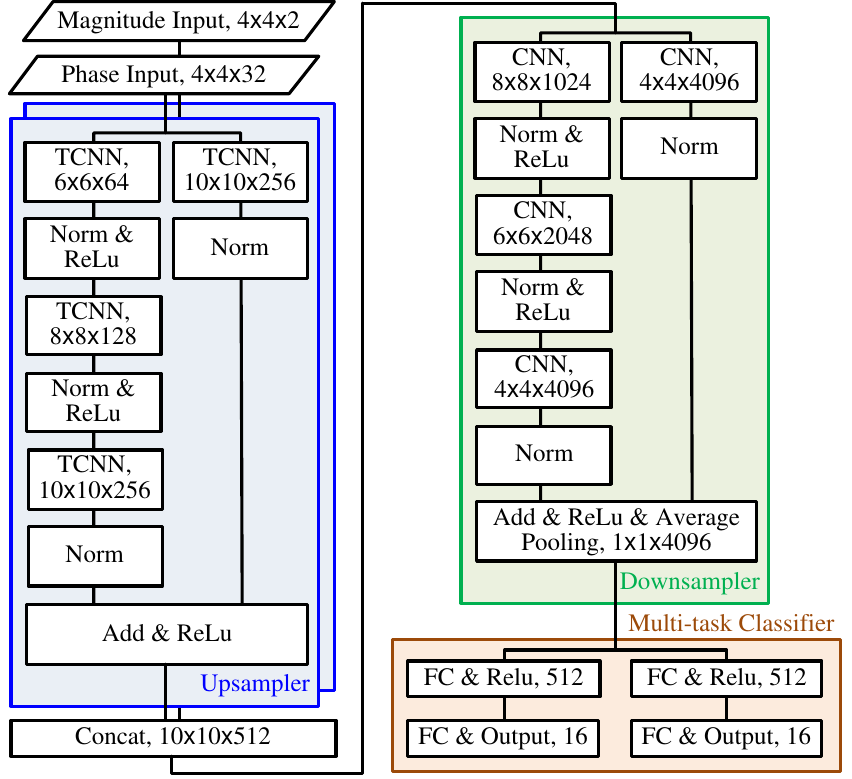} }%
    \caption{DMTNet architecture for 4R4T MIMO system with 2 active RIS elements controlled by 4-bits DPSs.
    \label{fig:DMTNetArchitecture}}
\end{figure}

Because of the complicated interference among RIS elements, the extension of Theorem \ref{thm1} for multiple RIS elements is non-trivial. Fortunately, the predictor in (\ref{eq:AnyP}) describes the correlation between static and dynamic channels for all considering phase shifts.
With DMT parameters of each RIS element, their correlations can be achieved and be utilized by the proposed deep learning, namely DMTNet, to select the phase of all RIS elements simultaneously.

DMTNet architecture and the output size of each layer are shown in Fig.~\ref{fig:DMTNetArchitecture}.
Two paths of inputs are the magnitude and phase of $p^{r, t}_{\theta_k}$, corresponding to the dimensions of $R \times T \times S$. For the phase of input, $S = KD$, where $D$ is the number of controllable phases of DPS.
Because of  (\ref{eq:AnyPM}), the magnitude of input for one of RIS element should be considered one times, i.e., $S = K$.

DMTNet is composed of three components, a upsampler, a downsampler, and a multi-task classifier. The upsampler consists of transposed convolution neural networks (TCNNs) to upsample the much small dimension of input, compared with a common application of image. The downsampler consists of convolution neural networks (CNNs) to extract feature. These two components are deployed by shortcut connections, inspired by the residual network to avoid the vanishing gradient problem. Finally, the multi-task classifier is composed of fully-connect neural networks (FCs). The cross-entropy loss function is utilized to optimize classification model.

\section{Vital Signs Extraction}

\subsection{Vital Signs Signals}

The propose of vital signs extraction is to get the respiration rate $\omega$.
Fixing the phase shifts of all $K$ active RIS elements, (\ref{eq:MT}) at the $p$-th packet is redefined as follows:
\begin{equation} \label{eq:MTvs}
    m^{r, t}_{\omega, p} =
    \underline{\mathcal{M}}
    ( \underline{\mathcal{C}} e^{ j \frac{2 \pi}{\lambda}l^{r', k}_{\rm w}  } +
    \underline{\mathcal{D}} )^{-1} +
    ( \underline{\mathcal{A}} / \underline{\mathcal{C}} ),
\end{equation}
where
\begin{equation} \label{eq:MTvspar}
\begin{split}
    \underline{ \mathcal{A} }
    & =
    h^{r', t}_{\rm s} +
    h^{k, t}_{\rm o}
    h^{r', k}_{\rm o}
    a_k
    , \;
    \underline{\mathcal{B}} =
    h^{k, t}_{\rm o}
    h^{r', k}_{\rm w}
    a_k,  \\
    \underline{\mathcal{C}}
    & =
    h^{r, t}_{\rm s} +
    h^{k, t}_{\rm o}
    h^{r, k}_{\rm o}
    a_k
    , \;
    \underline{\mathcal{D}} =
    h^{k, t}_{\rm o}
    h^{r, k}_{\rm w}
    a_k
    e^{ -j \frac{2 \pi}{\lambda}\Delta  },  \\
    \underline{\mathcal{M}}
    & =
    (\underline{\mathcal{B}} \: \underline{\mathcal{C}} -
     \underline{\mathcal{A}} \: \underline{\mathcal{D}}) /
     \underline{\mathcal{C}}
    , \;
    \underline{\mathcal{B}} \: \underline{\mathcal{C}} -
    \underline{\mathcal{A}} \: \underline{\mathcal{D}} \neq 0,
\end{split}
\end{equation}
$\Delta = l^{r, k}_{\rm w} - l^{r', k}_{\rm w}$ approaches to a constant because the $r$-th and $r'$-th antennas are close, and the displacements $l^{r', k}_{\rm w}$ and $l^{r, k}_{\rm w}$ change with the respiration activity at the $p$-th packet.
\cite{Zeng-2019} verified that $m^{r, t}_{\omega, p}$ with vital signs changes following a circle in complex plane. Then, the respiration is described by the phase of $m^{r, t}_{\omega, p}$. Moreover, except the considering $r$-th receive antenna of $m^{r, t}_{\omega, p}$, the other antennas can be the reference, denoted as the $r'$-th receive antenna. Hence, the phase of (\ref{eq:MTvs}) has $R-1$ combinations. However, when the conjugate of $m^{r', t}_{\omega, p}$ is $m^{r, t}_{\omega, p}$, i.e., $m^{r', t}_{\omega, p} = 1 / m^{r, t}_{\omega, p}$, they are dependent on the same displacement. One of them needs to be considered. For (\ref{eq:MTvs}), the phase vector of all necessary combinations is denoted as $\measuredangle{\qm^{r, t}_{\omega, p}}\in \bbR^{(R-1)/2 }$.

When the tester's respiration rate is $\omega$, the corresponding total channel from the $t$-th transmit antenna to the $r$-th receive antenna at the $p$-th packet is denoted as $c^{r, t}_{\omega, p}$.
The amplitude of $c^{r, t}_{\omega, p}$ is also dependent on the displacement of chest wall, and is denoted as $|c^{r, t}_{\omega, p}|$.
The average of $|c^{r, t}_{\omega, p}|$ and $\measuredangle{\qm^{r, t}_{\omega, p}}$ over $F$ subcarriers are applied and denoted as $|\overline{c^{r, t}_{\omega, p}}|$ and $\measuredangle{\overline{\qm^{r, t}_{\omega, p}}}$, respectively.
For the $t$-th transmit antenna, the average amplitude vector and the average phase vector for $R$ receive antennas are respectively expressed as
\begin{subequations}
\begin{align}
    | \qc^{t}_{\omega, p} |
    &=
    [ |\overline{c^{1, t}_{\omega, p}}|,
    \ldots,
    |\overline{c^{R, t}_{\omega, p}}| ]^T
    \in \bbR^R,
    \label{eq:VSmag}                                                \\
    \measuredangle{\qm^{t}_{\omega, p}}
    & =
    [{\measuredangle{\overline{\qm^{1, t}_{\omega, p}}}}^T,
    \ldots,
    {\measuredangle{\overline{\qm^{R, t}_{\omega, p}}}}^T ]^T
    \in \bbR^{R(R-1)/2 }.
    \label{eq:VSang}
\end{align}
\end{subequations}
For $T$ transmit antennas and $P$ packets, the vital signs matrix is $\qY_{\rm vs} = [\qy_{\rm vs}^1, \ldots, \qy_{\rm vs}^N]^T \in \bbR^{N \times P}$,
where the $n$-th vital signs vector $\qy_{\rm vs}^{n}$ is the stack of (\ref{eq:VSmag}) and (\ref{eq:VSang}) from all transmit antennas, and $N \triangleq TR(R+1)/2$.
Moreover, $\qY_{\rm vs}$ can be regarded as $N$ sensors with $P$ sampling points in the measurement area.

Because the respiration activity has the periodicity, the model of the reflected signal from the human subject can be simplified as a complex exponential function of time.
The $n$-th vital signs vector with the frequency $\omega$ is modeled as
\begin{equation} \label{eq:VSvec}
    \qy^{n}_{\rm vs} = g^{n}_{\rm vs} \qs_{\rm vs}( \omega) + \qepsilon^{n}_{\rm vs},
\end{equation}
where $g^{n}_{\rm vs}$ is the complex coefficient of the vital sign, $\qs_{\rm vs}( \omega) = [1, e^{j \omega}, \ldots, e^{j \omega (P-1)}]^T \in \bbC^{P}$ is the exponential vector, and $\qepsilon^{n}_{\rm vs}\in \bbC^{P}$ is the noise vector with the DC offset.

\subsection{Vital Signs Estimation}

Our signal processing consists of two steps: DC offset calibration and vital signs extraction.
The DC offset calibration is introduced by removing $\qy^{n}_{\rm vs}$ with the $2$-nd degree polynomial curve fitting.
For the vital signs extraction, the design is difficult in a MIMO system due to two reasons.
Firstly, the number of the equivalent sensors $N$ is large, e.g., $N=40$ for the 4R4T MIMO system. Secondly, because the vital sign signals are from the different propagation paths, they are corresponding to the obviously different power levels. Therefore, the challenge is to extract the vital signs signals with the different signal qualities.
According to this two reasons, the data fusion algorithm, namely DeepMining-MMV \cite{Chian-2022}, is applied for the vital signs extraction.

Three stages are utilized in DeepMining-MMV. We modify the first stage to detect vital signs and use the second and third stages to estimate them. Firstly, we merge the normalized frequency spectrums of the reflected signals into one spectrum. To determine whether the subjects exist and are stable in the measurement area or not, the cell-averaging constant false alarm rate (CA-CFAR) \cite{Radar-BOOK} algorithm is utilized for the vital sign detection. If the vital signs is detected successfully, the frequency corresponding to the strongest power in the spectrum is configured to the initial frequency $\widehat{\omega}$.
Secondly, the complex coefficient $\widehat{g^{n}_{\rm vs}}$
%$\widehat{\qg_{\rm vs}}= [\widehat{g^{1}_{\rm vs}}, \ldots, \widehat{g^{N}_{\rm vs}}]^T$
is refined by the least-squares (LS) estimator with the given initial frequency. Thirdly, the initial frequency is refined by using the Newton’s method to the normalized LS estimator based on the power of residual signals $|\sigma^n_{\rm vs}|^2$, where the residual signals are achieved by $\qy^{n}_{\rm vs} := \qy^{n}_{\rm vs} - \widehat{g^{n}_{\rm vs}} \qs_{\rm vs}( \widehat{\omega})$. The third stage is equivalent to minimizing the cost function
\begin{equation} \label{eq:LS-estimator-normal}
    J(\widehat{\qg_{\rm vs}},\widehat{\omega}) =
    \sum_{n=1}^{N}
    \frac{
    -2 {\mathfrak{Re}} ( \widehat{g^n_{\rm vs}} {\qy^n_{\rm vs}}^H \qs(\widehat{\omega}) )
    +  |\widehat{g^n_{\rm vs}}|^2 \left\| \qs( \widehat{\omega} ) \right\|^2  }
   { |\sigma^n_{\rm vs}|^2 },
\end{equation}
% where $|\sigma^n_{\rm vs}|^2$ is the residual signal power of the $n$-th equivalent sensor.
% Furthermore, the accuracy of frequency $\widehat{\omega}$ can be improved by repeating the second and third stage with the cyclic method \cite{Mama-2016} for several times.

\section{Perfomance Evaluation}

\subsection{Training Results}

\begin{figure}
    \centering
    \resizebox{3.4in}{!}{%
    \includegraphics*{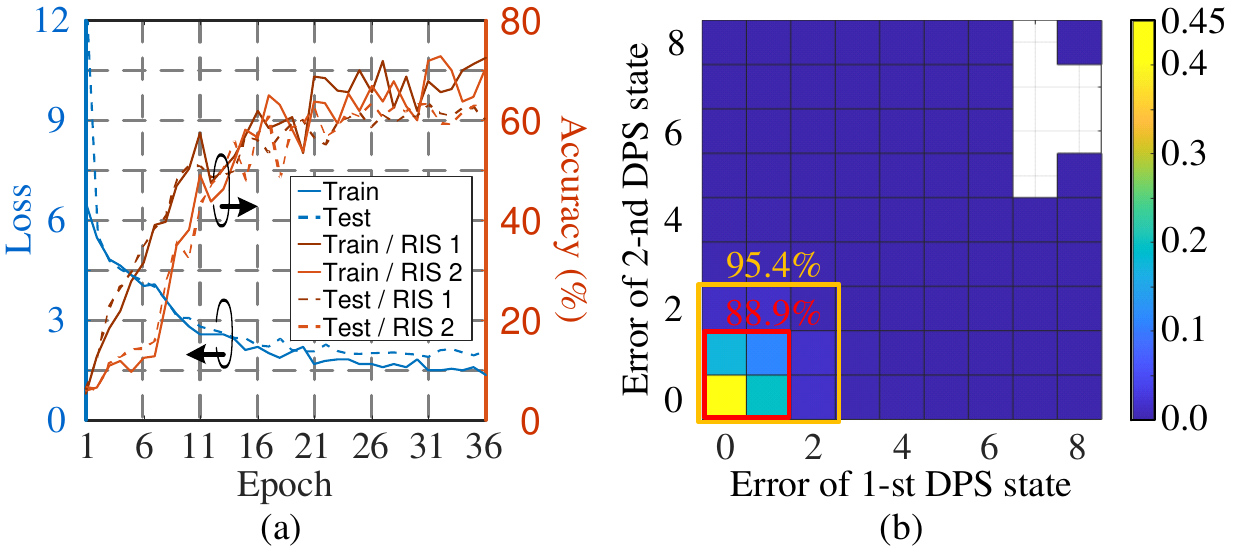} }%
    \caption{DMTNet: (a) Loss and accuracy, and (b) error of testing dataset.
    \label{fig:DMTNetAccuracy}}
\end{figure}

We modify the verified RIS's model \cite{Chian-2023} to a 4R4T MIMO system in the scenario containing scattering clusters based on a large-scale fading. Two RIS elements is applied, and the number of controllable phases $D$ is 16. The training and testing datasets are generated by this channel model without noise. The number of them are 180,000 and 20,000 samples, respectively. The operating frequency of single subcarrier is 3.65 GHz. The optimized loss and accuracy of DMTNet are shown in Fig.~\ref{fig:DMTNetAccuracy}(a), and the errors of two DPSs are shown in Fig.~\ref{fig:DMTNetAccuracy}(b).
Although the accuracy of each DPS is only greater than 60$\%$, approximate 95.4$\%$ of the samples have phase selection errors less than 2. This indicates that, for 95.4$\%$ of the samples, the phase differences from the optimal values are less than 45$^{\circ}$ when considering two RIS elements simultaneously. % 解釋: 因為1個相移器的16個狀態可轉360度，所以選擇相位的最大誤差為180度，因此狀態的最大誤差為8。雖然DMTNet準確度只有60%以上，但是有95.4%的樣本是狀態誤差小於2。這說明使用DMTNet選擇相位的95.4%樣本，跟最佳狀態的相位誤差會小於45度。

\subsection{Measurements}

\begin{figure}
    \centering
    \resizebox{3.5in}{!}{%
    \includegraphics*{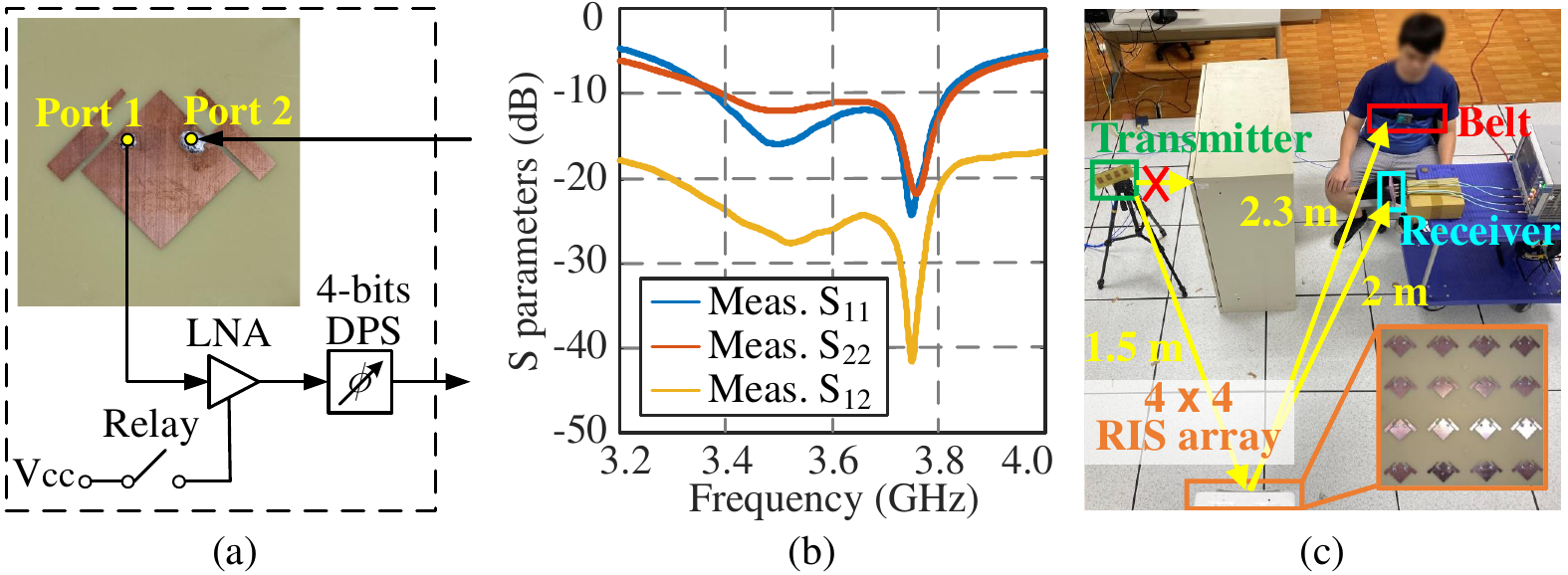} }%
    \caption{(a) Structure of active RIS element, (b) measured S parameters of antenna pair, and (c) experimental scenario.
    \label{fig:RISant}}
\end{figure}

The structure of one active RIS element is shown in Fig.~\ref{fig:RISant}(a), composed of a self-decoupled antenna pair with two feeding ports, a LNA with an approximate 20 dB power gain, a 4-bits DPS with 16 phase shift states in 22.5$^{\circ}$ steps, and a relay.
Two feeding ports perform the function of transmit and receive signals, respectively. They are connected to the different other antenna pairs to enhance the isolation among all ports deployed on a RIS array. Moreover, a self-decoupled antenna pair is operated in the frequency range of 3.4-3.8 GHz. In Fig.~\ref{fig:RISant}(b), their measured reflection and transmission coefficients in the operating band are less than -10 dB and -20 dB, respectively.
Finally, the 4 $\times$ 4 active RIS array is utilized in the ISAC application, shown in Fig.~\ref{fig:RISant}(c), to improve the signal quality degraded by the obstacle. Our 4R4T MIMO system follows that in \cite{Chian-2024} and transmits 5G OFDM signals in
the 3.7-3.72 GHz band.

\begin{figure}
    \centering
    \resizebox{3.7in}{!}{%
    \includegraphics*{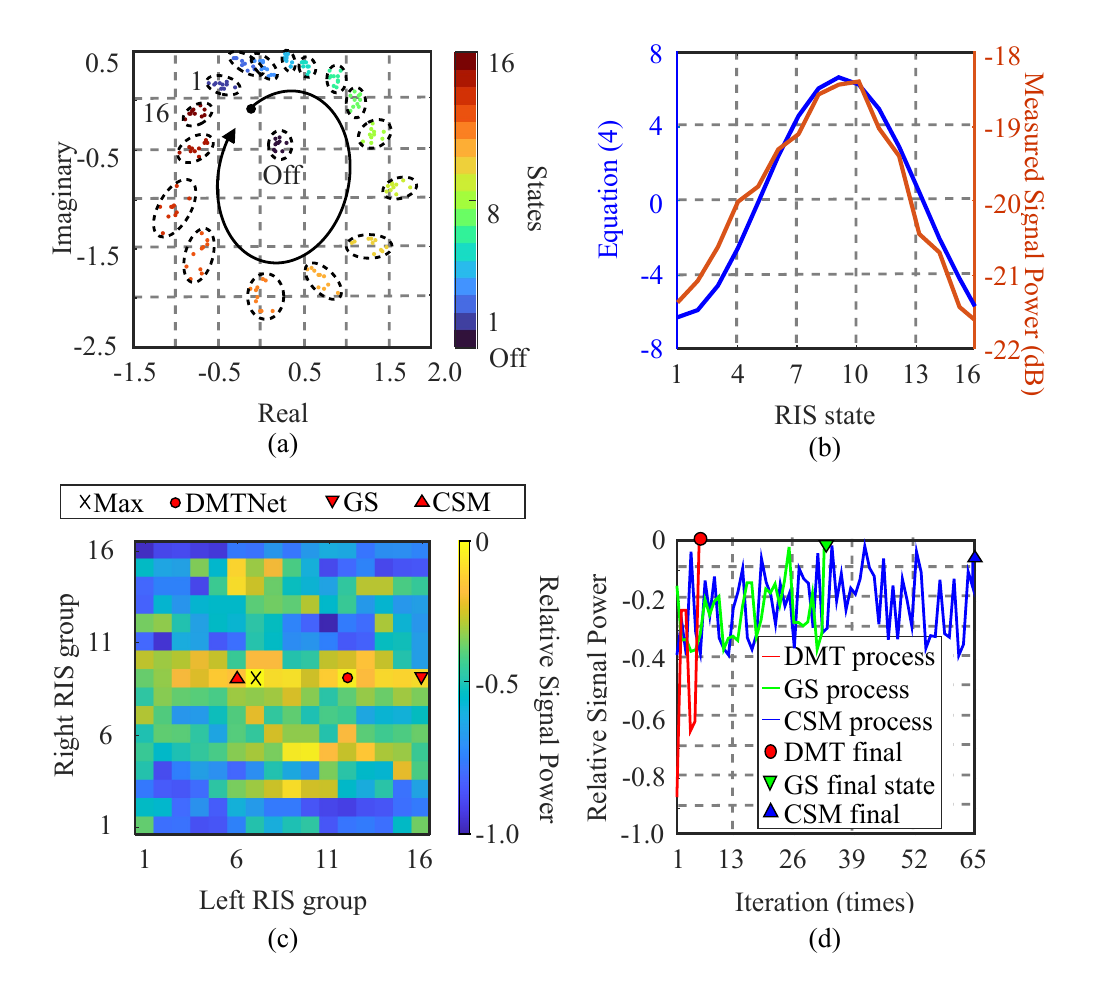} }%
    \caption{(a) $m^{1, 1}_{\rm s}$ and $m^{1, 1}_{\theta_k}$ for all states of DPS.
		     (b) Predicted reward and measuremed signal power.
	     	 (c) All combinations and (d) controlling processes by using different algorithms.
    \label{fig:Meas}}
\end{figure}

Firstly, we verify the predictor in Section \ref{sec:Predictor}. To simplify and strengthen the phenomenon, all RIS elements are specially configured to the same phase shift state, and Rx is in front of the RIS. By turning off all relays, the MT of static channel $m^{r, t}_{\rm s}$ is achieved. When we turn on all relays and iterate through the phase shift states, the MT of total channels for all states $m^{r, t}_{\theta_k}$ are achieved. For the one subcarrier's channel from the 1-st transmit antenna to the 1-st receive antenna, the MT of different states' channels with 10 continuous packets are shown in Fig.~\ref{fig:Meas}(a). The stable and reliable channel information is achieved by averaging continuous packets for each state. According to $| \theta'_k - \theta_k | = 180^{\circ}$, the DMT parameters of the 1-st and 9-th phase shift states are utilized for the predictor, and we will get the reward of each state. By summing across all subcarriers, the rewards are shown in Fig.~\ref{fig:Meas}(b). Because the predicted rewards closely match the measured signal power for each state, the validity of the proposed predictor is confirmed.

Secondly, with the results of predictor, DMTNet is evaluated in Fig.~\ref{fig:Meas}(c). The elements of RIS array are divided into left and right groups. The elements in the same group have the identical phase shift states. For all combinations of states, the relative signal power is shown in Fig.~\ref{fig:Meas}(c). Two other controlling algorithms are included in the discussion. One is the greedy searching (GS) algorithm \cite{Chian-2023}; the other is conditional sample mean (CSM) algorithm \cite{Ren-2022}, based on the phase expectation
of each group. The final states of three algorithms are close to the optimum state. However, their controlling processes are shown in Fig.~\ref{fig:Meas}(d), and the huge difference in their number of iterations can be observed. The number of iterations with DMTNet, GS, and CSM are 5, 32, and 64, respectively.

\begin{figure}
    \centering
    \resizebox{3.1in}{!}{%
    \includegraphics*{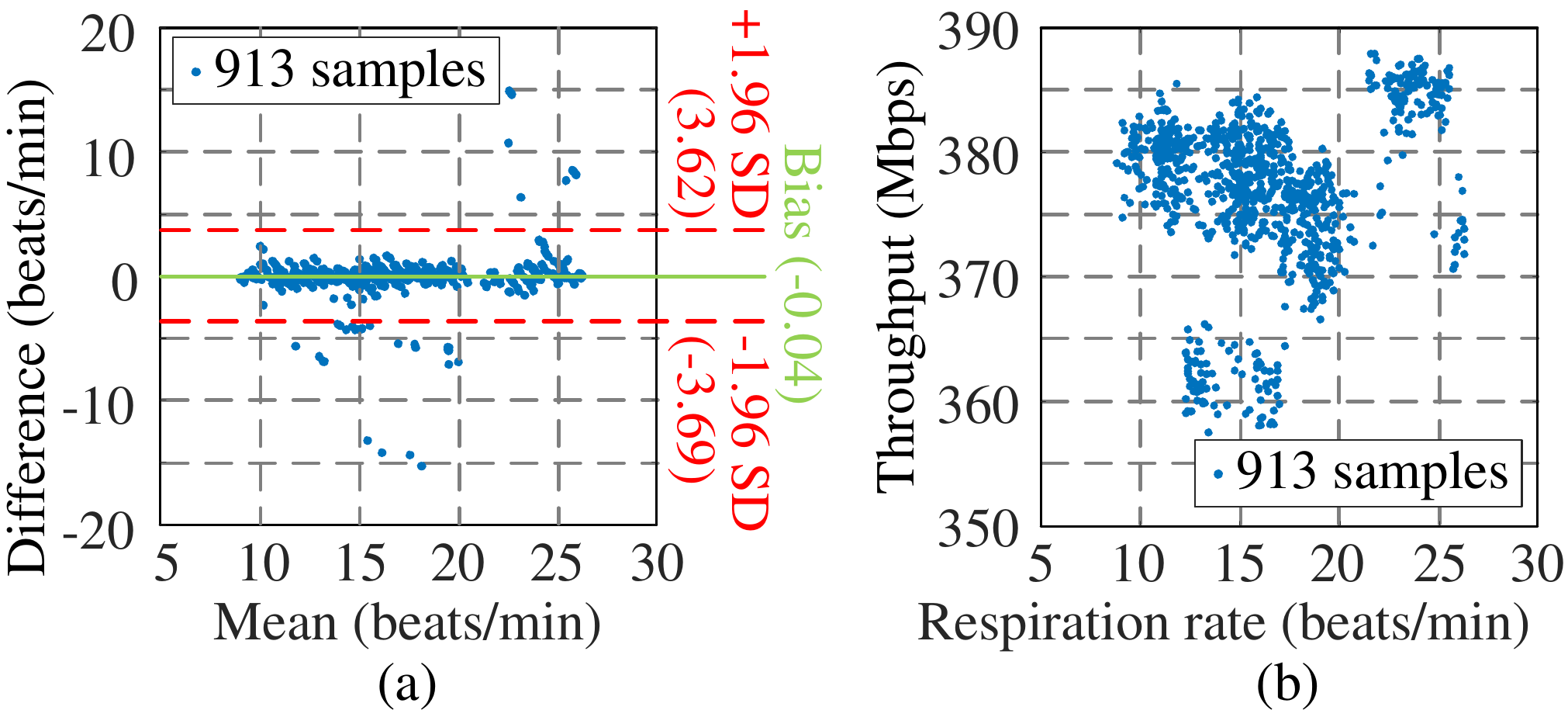} }%
    \caption{ISAC performance:
		     (a) Bland–Altman plot of respiration rate and (b) throughput in different respiration rates.
    \label{fig:VitalSign}}
\end{figure}

After applying the phase selector of DMTNet, the vital signs extraction is utilized. 
We use a respiration belt, as a contact method, to compare with our noncontact method. Because a high correlation does not imply a good agreement, we use the Bland–Altman plot to describe an agreement and shown in Fig.~\ref{fig:VitalSign}(a). The samples of over 95.7$\%$ fall within 1.96 standard deviations from the mean. Because of a good agreement between two methods, a high accuracy of our method is indicated.
Moreover, our MIMO system can support 64 QAM and achieve the maximum throughput of about 388 Mbps in Fig.~\ref{fig:VitalSign}(b).
Contrarily, without RIS, the system cannot detect respiration rate, and the MIMO system only supports 16 QAM.

\section{Conclusion}
The proposed RIS-VSign system demonstrates an ISAC framework in a 5G NR–compatible MIMO-OFDM system by combining DMTNet-based active-RIS phase control with DeepMining-MMV-based respiration estimation. The experimental results indicate that deploying an active RIS can simultaneously enhance sensing reliability and communication performance. In contrast, the RIS-free system fails to reliably detect respiration and supports only lower-order modulation.

\section*{Appendix: Proof of Theorem \ref{thm1}}

With (\ref{eq:SNRone}) and (\ref{eq:DMT}), we have $|{\delta}^{r, t}_{\theta_k}|^{-2} = E^{r, t}_{\theta_k} |\mathcal{M}|^{-2}$.
Then, $|{\delta}^{r, t}_{\theta_k}|^{-2} - |{\delta}^{r, t}_{\theta'_k}|^{-2} = 2[{\mathfrak{Re}}(p^{r, t}_{\theta_k}) - {\mathfrak{Re}}(p^{r, t}_{\theta'_k})] |\mathcal{M}|^{-2}$.
Assuming that $|\theta'_k - \theta_k| = 180^{\circ}$, we can get ${\mathfrak{Re}}(p^{r, t}_{\theta'_k}) = - {\mathfrak{Re}}(p^{r, t}_{\theta_k})$.
Thus, $4{\mathfrak{Re}}(p^{r, t}_{\theta_k}) = |\mathcal{M}|^{2} (|{\delta}^{r, t}_{\theta_k}|^{-2} - |{\delta}^{r, t}_{\theta'_k}|^{-2})  = |\mathcal{M}|^{2} (|{\delta}^{r, t}_{\theta'_k}|^{2} - |{\delta}^{r, t}_{\theta_k}|^{2}) / ( |{\delta}^{r, t}_{\theta_k}|^{2} |{\delta}^{r, t}_{\theta'_k}|^{2})$, where $|\mathcal{M}|^{2} = |{\delta}^{r, t}_{\theta_k}|^{2} E^{r, t}_{\theta_k}$.
Finally, the (\ref{eq:InProduct_Re}) of Theorem \ref{thm1} is proved.

(\ref{eq:DMT}) can be expressed as $({\delta}^{r, t}_{\theta_k})^{-1}\mathcal{M} = \mathcal{C} e^{ j \theta_k} + \mathcal{D}$.
When $|\theta'_k - \theta_k| = 180^{\circ}$, we get $(\mathcal{C} e^{ j \theta_k} + \mathcal{D})(\mathcal{C} e^{ j \theta'_k} + \mathcal{D})^{*} = - |\mathcal{C}|^{2} + |\mathcal{D}|^{2} + 2j{\mathfrak{Im}}(p^{r, t}_{\theta_k}) = |\mathcal{M}|^{2} [{\delta}^{r, t}_{\theta_k}({\delta}^{r, t}_{\theta'_k})^{*}]^{-1}$.
Then, $4j{\mathfrak{Im}}(p^{r, t}_{\theta_k}) = |\mathcal{M}|^{2} \{ [{\delta}^{r, t}_{\theta_k}({\delta}^{r, t}_{\theta'_k})^{*}]^{-1} - [{\delta}^{r, t}_{\theta'_k}({\delta}^{r, t}_{\theta_k})^{*}]^{-1} \} = |\mathcal{M}|^{2} [ \delta^{r, t}_{\theta'_k} (\delta^{r, t}_{\theta_k})^* - (\delta^{r, t}_{\theta'_k})^* \delta^{r, t}_{\theta_k}] / ( |{\delta}^{r, t}_{\theta_k}|^{2} |{\delta}^{r, t}_{\theta'_k}|^{2})$, where $|\mathcal{M}|^{2} = |{\delta}^{r, t}_{\theta_k}|^{2} E^{r, t}_{\theta_k}$.
Finally, the (\ref{eq:InProduct_Im}) of Theorem \ref{thm1} is proved.

\bibliographystyle{IEEEtran}
\bibliography{References_Quantized}

\end{document}